\newcommand{\Nat}{\mathbb{N}}
\newcommand{\meet}{\wedge}
\newcommand{\join}{\vee}
\newcommand{\bigmeet}{\bigwedge}
\newcommand{\bigjoin}{\bigvee}
\newcommand{\lcm}{\operatorname{\mathrm{lcm}}}
\newcommand{\upsort}[1]{{#1}^{\uparrow}}
\newcommand{\weakupsort}[1]{{#1}^\vartriangle}
\newcommand{\pascalupsort}[3]{\weakupsort{#1}\,\tbinom{#2}{#3}}
\newcommand{\eqdef}{\coloneqq}
\theoremstyle{definition}
\newtheorem{definition}{Definition}
\theoremstyle{theorem}
\newtheorem{lemma}[definition]{Lemma}
\newtheorem{proposition}[definition]{Proposition}
\title{Recursive Sorting in Lattices}
\author{Jens Gerlach\\
        {\small Fraunhofer FOKUS, Berlin, Germany}
       }
\date{}
\begin{document}



\maketitle

\begin{abstract}
The direct application of the definition of sorting in
lattices~\cite{arxiv-sorting-in-lattices} is impractical
because it leads to an algorithm with exponential complexity.
In this paper we present for \emph{distributive} lattices a recursive formulation to compute
the sort of a sequence.
This alternative formulation is inspired by the identity 
$\binom{n}{k} = \binom{n-1}{k-1} + \binom{n-1}{k}$ that underlies Pascal's triangle.
It provides quadratic complexity and is in fact a generalization of \emph{insertion sort} for lattices.
\end{abstract}

\section{Background}

If someone asked whether there is for the numbers $x$ and $y$ and the exponent~$n$
a general relationship between the value
$(x+y)^n$ and the powers~$x^n$ and~$y^n$, then the (obvious) answer is that
this relationship is captured by the Binomial Theorem
\begin{align*}
    (x+y)^n &= \sum_{k = 0}^n \binom{n}{k}x^{n-k} y^{k}
\end{align*}
which also shows that other powers of $x$ and $y$ are involved.

If, on the other hand, $(x_1,\ldots,x_n)$ is a sequence in a \emph{totally ordered} set $(X,\leq)$
and someone asked whether there is a general relationship between the elements
of~$x$ and the elements of its nondecreasingly sorted 
counterpart~$\upsort{x} = \left(\upsort{x}_1,\ldots,\upsort{x}_n\right)$,
then one could provide an easy answer for the first and last elements of~$\upsort{x}$.
In fact, we know that $\upsort{x}_1$ is the least element of $\{x_1,\ldots,x_n\}$
\begin{align*}
  \upsort{x}_1 &= x_1 \meet \ldots \meet x_n  = \bigmeet_{k = 1}^n x_k,
\intertext{whereas $\upsort{x}_n$ is the greatest element of $\{x_1,\ldots,x_n\}$}
  \upsort{x}_n &= x_1 \join \ldots \join x_n = \bigjoin_{k = 1}^n x_k.
\end{align*}

The relationship between an arbitrary element $\upsort{x}_k$ and the elements
of~$x$ reads
\begin{align}
\label{eq:sorting-equivalence}
  \upsort{x}_k  &=  \bigmeet_{I \in \Nat\tbinom{n}{k}} \bigjoin_{i \in I} x_i
\end{align}
and has been proven in a previous work of the author~\cite[Proposition~2.2]{arxiv-sorting-in-lattices}.
Here~$\Nat\tbinom{n}{k}$ is the set of subsets of~$[1,n]$ that contain exactly $k$ elements.
Note that $\Nat\tbinom{n}{k}$ consists of $\tbinom{n}{k}$ elements.

Equation~\eqref{eq:sorting-equivalence} is not just a compact formula for 
computing~$\upsort{x}$.
It also provides a way to generalize the concept of sorting 
\emph{beyond} totally ordered sets.
In fact, if $(X\leq)$ is a partically ordered set, that is also a lattice $(X,\meet,\join)$,
then for each finite subset~$A$ of~$X$
both the infimum and supremum of~$A$ exist (denoted by $\bigmeet A$ and $\bigjoin A$, respectively).
Thus, the right hand side of Equation~\eqref{eq:sorting-equivalence} is well-formed
in a lattice.
Therefore we can define (as in~\cite[Definition~3.1]{arxiv-sorting-in-lattices})
for $x = (x_1,\ldots,x_n)$ a new 
sequence~$\weakupsort{x} = \left(\weakupsort{x}_1,\ldots,\weakupsort{x}_n\right)$ by
\begin{align}
\label{eq:weakupsort}
   \weakupsort{x}_k &\eqdef \bigmeet_{I \in \Nat\tbinom{n}{k}} \bigjoin_{i \in I} x_i.
\end{align}
We refer to $\weakupsort{x}$ as \emph{$x$ sorted with respect to the lattice $(X,\meet,\join)$}.
It can be shown that this definition of sorting in lattices
maintains many properties that are familiar from sorting in totally ordered sets.
For example, the sequence $\left(\weakupsort{x}_1,\ldots,\weakupsort{x}_n\right)$ is 
nondecreasing~\cite[Lemma~2.3]{arxiv-sorting-in-lattices} and 
the mapping $x \mapsto \weakupsort{x}$ is idempotent~\cite[Lemma~3.6]{arxiv-sorting-in-lattices}.

Note that we reserve the notation $\upsort{x}$ in order to refer
to \emph{sorting in totally ordered sets} whereas we use 
the notation~$\weakupsort{x}$ to refer to \emph{sorting in lattices}.

\section{The need for a more efficient formula}

Definition~\eqref{eq:weakupsort} is nice and succinct, but it is also are quite
impractical to use in computations.
While conducting some experiments with Equation~\eqref{eq:weakupsort} in the
lattice~$(\Nat,\gcd,\lcm)$ it became obvious that only for very short sequences~$x$
the sequence~$\weakupsort{x}$ can be computed in a reasonable time.

Table~\ref{tbl:performance-exp} shows simple
performance measurements (conducted on a notebook computer) for 
computing $\weakupsort{(1,\ldots,n)}$ in~$(\Nat,\gcd,\lcm)$.
The reason for this dramatic slowdown is of course the exponential complexity
inherent in Equation~\eqref{eq:weakupsort}: In order to
compute~$\weakupsort{x}$ from~$x$ it is necessary to consider
all~$2^n-1$ nonempty subsets of~$[1,n]$.

\clearpage

\begin{table}[hbt]
\begin{center}
\begin{tabular}{|l|c|c|c|c|c|c|c|}
\hline
  \textbf{sequence length}  & 20  &  21  & 22  &  23  & 24   & 25   & 26   \\
\hline
  \textbf{time in $s$} & 0.6 &  1.3 & 2.7 &  5.8 & 11.8 & 25.5 & 51.6 \\
\hline
\end{tabular}
\end{center}
\caption{\label{tbl:performance-exp} Wall-clock time for computing~$\weakupsort{(1,\ldots,n)}$
    according to Equation~\eqref{eq:weakupsort}}
\end{table}

In order to address this problem, we prove in Proposition~\ref{recursive-sorting}
the recursive Identity~\eqref{eq:pascal-identity} for the case of
\emph{bounded distributive lattices}.
This identity is closely related to the well-known fact that the binomial coefficient
\begin{align*}
   \binom{n}{k} &= \frac{n!}{k! \cdot (n-k)!}
\intertext{can be efficiently computed through the recursion underlying Pascal's triangle}
   \binom{n}{k} &= \binom{n-1}{k-1} + \binom{n-1}{k}.
\end{align*}

Furthermore, we prove in Proposition~\ref{recursive-sorting-converse} that
a lattice, in which Identity~\eqref{eq:pascal-identity} holds, is necessarily distributive.

\section{Recursive sorting in lattices}

For the remainder of this paper we assume that $(X,\meet,\join, \bot, \top)$ is a 
\emph{bounded} lattice.
Here~$\bot$ is the least element of $X$ and the neutral element of join
\begin{align}
\label{eq:bot}
  x &= \bot \join x = x \join \bot  && \forall x \in X.
\intertext{At the same time, $\top$ is the greatest element of $X$ and the neutral element of meet}
\label{eq:top}
  x &= \top \meet x = x \meet \top  && \forall x \in X.
\end{align}

We are now introducing a notation that allows us to
concisely refer to individual elements of both $\weakupsort{(x_1,\ldots,x_n)}$
and $\weakupsort{(x_1,\ldots,x_{n-1})}$.
Here again, it is convenient to employ the symbol for the binomial coefficient~$\tbinom{n}{k}$
in the context of sorting in lattices.

For a sequence $x$ of length~$n$ we define for $0 \leq m \leq n$
\begin{align}
\label{eq:pascalsort}
    \pascalupsort{x}{m}{k} 
       &\eqdef\begin{cases}
                \bot & \qquad k = 0 \\
                \weakupsort{(x_1,\ldots,x_m)}(k)  & \qquad  k \in [1,m]\\
                \top & \qquad k = m+1
               \end{cases}
\end{align}

We know from the definition of $\weakupsort{x}$ in Equation~\eqref{eq:weakupsort} that
\begin{align}
\nonumber
    \weakupsort{(x_1,\ldots,x_m)}(k) 
        &= \bigmeet_{I \in \Nat\binom{m}{k}} \bigjoin_{i \in I} x_i 
\intertext{holds for $1 \leq k \leq m$. We therefore have for $1 \leq k \leq m$}
\label{eq:pascalsort-general}
   \pascalupsort{x}{m}{k} 
        &= \bigmeet_{I \in \Nat\binom{m}{k}} \bigjoin_{i \in I} x_i.
\intertext{In particular, the identity}
\label{eq:pascalsort-special}
   \pascalupsort{x}{n}{k} &= \weakupsort{x}_k 
\end{align}
holds for $1 \leq k \leq n$.

The main result of this paper is Proposition~\ref{recursive-sorting}, which
states in Identity~\eqref{eq:pascal-identity},
how the $k\text{th}$ element of $\weakupsort{(x_1,\ldots,x_n)}$ can be computed
from~$\weakupsort{(x_1,\ldots,x_{n-1})}$ and~$x_n$
by simply applying one \emph{join} and one \emph{meet}.

The proof of Proposition~\ref{recursive-sorting} relies on the fact that the lattice
under consideration is both \emph{bounded} and \emph{distributive}.
The boundedness of~$X$ is, in contrast to its distributivity, no real restriction because
every lattice can be turned into a bounded lattice by
adjoining a smallest and a greatest element~\cite[p.~7]{roman2008lattices}.

\begin{proposition}
\label{recursive-sorting}
If $(X,\meet,\join, \bot, \top)$ is a bounded distributive lattice
and if~$x$ is a sequence of length~$n$, then 
\begin{align}
\label{eq:pascal-identity}
   \pascalupsort{x}{n}{k} &=
        \pascalupsort{x}{n-1}{k} \meet \biggl(\pascalupsort{x}{n-1}{k-1} \join x_{n}\biggr)
\end{align}
holds for $1 \leq k \leq n$.
\end{proposition}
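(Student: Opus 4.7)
My approach is to expand the left-hand side via Equation~\eqref{eq:pascalsort-general} and then split the indexing set $\Nat\tbinom{n}{k}$ according to whether a subset contains the index~$n$. Using distributivity, the part indexed by subsets containing~$n$ will factor as $\pascalupsort{x}{n-1}{k-1} \join x_n$, while the part indexed by subsets not containing~$n$ is exactly $\pascalupsort{x}{n-1}{k}$, yielding Identity~\eqref{eq:pascal-identity} after one more meet.

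\textbf{Step 1: the decomposition of the index set.} Every $I \in \Nat\tbinom{n}{k}$ either avoids $n$, in which case $I \in \Nat\tbinom{n-1}{k}$, or contains $n$, in which case $I = J \cup \{n\}$ for a unique $J \in \Nat\tbinom{n-1}{k-1}$. For $1 \le k \le n-1$ both families are nonempty, and splitting the outer meet gives
\begin{align*}
  \pascalupsort{x}{n}{k}
    &= \Bigl(\bigmeet_{I \in \Nat\binom{n-1}{k}} \bigjoin_{i \in I} x_i\Bigr)
       \meet \Bigl(\bigmeet_{J \in \Nat\binom{n-1}{k-1}} \bigl(\bigjoin_{j \in J} x_j \join x_n\bigr)\Bigr).
\end{align*}
The first factor equals $\pascalupsort{x}{n-1}{k}$ directly from Equation~\eqref{eq:pascalsort-general}.

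\textbf{Step 2: factoring out $x_n$ by distributivity.} In a distributive lattice the identity $(a \join c) \meet (b \join c) = (a \meet b) \join c$ holds, and by induction on the size of a finite meet it generalizes to $\bigmeet_{J} (a_J \join c) = \bigl(\bigmeet_J a_J\bigr) \join c$. Applied with $a_J = \bigjoin_{j \in J} x_j$ and $c = x_n$, the second factor above becomes $\pascalupsort{x}{n-1}{k-1} \join x_n$. This establishes the identity for $1 \le k \le n-1$.

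\textbf{Step 3: boundary cases $k = 1$ and $k = n$.} The main subtlety is checking that the definition~\eqref{eq:pascalsort} of $\pascalupsort{x}{m}{k}$ at $k = 0$ and $k = m+1$ produces the right neutral elements. For $k = 1$, the family of subsets of $[1,n]$ containing $n$ reduces to the single set $\{n\}$, so we need $\pascalupsort{x}{n-1}{0} \join x_n = \bot \join x_n = x_n$, which holds by Equation~\eqref{eq:bot}. For $k = n$, the family of subsets avoiding $n$ is empty, so we need $\pascalupsort{x}{n-1}{n}$ to be the identity for meet, i.e.\ $\top$, which is exactly the value assigned by~\eqref{eq:pascalsort}, and then Equation~\eqref{eq:top} closes the argument. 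These two boundary checks are the only place where the choice of $\bot$ and $\top$ in the definition of $\weakupsort{x}\tbinom{m}{k}$ really matters; the rest of the proof is a bookkeeping exercise combining the partition of $\Nat\tbinom{n}{k}$ with the single distributive step of Step~2, which is where the hypothesis of distributivity is essential.
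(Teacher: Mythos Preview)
Your proof is correct and follows essentially the same approach as the paper's: decompose $\Nat\tbinom{n}{k}$ according to whether a subset contains $n$, apply distributivity to pull $x_n$ out of the resulting meet, and check that the boundary cases $k=1$ and $k=n$ are handled by the conventions $\pascalupsort{x}{n-1}{0}=\bot$ and $\pascalupsort{x}{n-1}{n}=\top$. The only cosmetic difference is ordering---the paper disposes of the corner cases first and then treats $1<k<n$, whereas you do the generic case first and the boundaries last.
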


\begin{proof}
We first consider the ``corner cases'' $k = 1$ and $k = n$.

For $k = 1$, we have
\begin{align*}
   \pascalupsort{x}{n}{1} 
         &= \bigmeet_{i = 1}^{n} x_i 
              && \text{by Identity~\eqref{eq:pascalsort-general}}\\
         &= \left(\bigmeet_{i = 1}^{n-1} x_i\right) \meet x_{n} 
             && \text{by associativity}\\
         &= \pascalupsort{x}{n-1}{1} \meet x_{n} 
              && \text{by Identity~\eqref{eq:pascalsort-general}}\\
         &= \pascalupsort{x}{n-1}{1} \meet \biggl(\bot \join x_{n}\biggr) 
              && \text{by Identity~\eqref{eq:bot}}\\
         &= \pascalupsort{x}{n-1}{1} \meet \biggl(\pascalupsort{x}{n-1}{0} \join x_{n}\biggr) 
              && \text{by Identity~\eqref{eq:pascalsort}.}
\end{align*}
 
For $k = n$, we have
\begin{align*}
  \pascalupsort{x}{n}{n} 
         &= \bigjoin_{i = 1}^{n} x_i 
              && \text{by Identity~\eqref{eq:pascalsort-general}}\\
         &= \left(\bigjoin_{i = 1}^{n-1} x_i\right) \join x_{n} 
             && \text{by associativity}\\
         &= \pascalupsort{x}{n-1}{n-1} \join x_{n} 
              && \text{by Identity~\eqref{eq:pascalsort-general}}\\
         &= \biggl(\pascalupsort{x}{n-1}{n-1} \join x_{n}\biggr) \meet \top
              && \text{by Identity~\eqref{eq:top}}\\
         &=  \biggl(\pascalupsort{x}{n-1}{n-1} \join x_{n}\biggr) \meet \pascalupsort{x}{n-1}{n}
              && \text{by Identity~\eqref{eq:pascalsort}}  \\
         &= \pascalupsort{x}{n-1}{n} \meet \biggl(\pascalupsort{x}{n-1}{n-1} \join x_{n}\biggr) 
              && \text{by commutativity.}
\end{align*}

In the general case of $1 < k < n$,
we first remark that if $A$ is a subset of~$[1,n]$ that consists of $k$ elements, then
there are two cases possible:

\begin{enumerate}
\item If $n$ does not belong to $A$, then $A$ is a subset of~$\Nat\tbinom{n-1}{k}$.
\item If $n$ is an element of $A$, then the set $B \eqdef A \setminus \{n\}$
      belongs to~$\Nat\tbinom{n-1}{k-1}$.
\end{enumerate}

In other words, $\Nat\tbinom{n}{k}$ can be represented as the
following (disjoint) union
\begin{align}
\label{eq:pascalsort-union}
 \Nat\tbinom{n}{k} &=  \Nat\tbinom{n-1}{k} \cup \left\{ B \cup \{n\} \bigm| B \in \Nat\tbinom{n-1}{k-1} \right\}.
\end{align}

We conclude

\begin{xalignat*}{5}
   \pascalupsort{x}{n}{k}
      &= \bigmeet_{I \in \Nat\binom{n}{k}} \bigjoin_{i \in I} x_i
           &&
           &&
           && \text{by Identity \eqref{eq:pascalsort-general}} \\
\intertext{}
      &= \bigmeet_{I \in \Nat\binom{n-1}{k}} \bigjoin_{i \in I} x_i
           &&\meet 
           &&\quad\bigmeet_{I \in \Nat\binom{n-1}{k-1}} \bigjoin_{i \in I \cup \{n\}} x_i
           && \text{by Identity \eqref{eq:pascalsort-union}} \\
\intertext{}
      &= \quad\pascalupsort{x}{n-1}{k}
           &&\meet 
           &&\quad\bigmeet_{I \in \Nat\binom{n-1}{k-1}} \bigjoin_{i \in I \cup \{n\}} x_i
           && \text{by Identity \eqref{eq:pascalsort-general}} \\
\intertext{}
      &= \quad\pascalupsort{x}{n-1}{k}
           &&\meet 
           &&\bigmeet_{I \in \Nat\binom{n-1}{k-1}} \left(\bigjoin_{i \in I} x_i \join x_n \right)
           && \text{by associativity}\\
\intertext{}
      &= \quad\pascalupsort{x}{n-1}{k}
           &&\meet 
           &&\left(
               \left(\bigmeet_{I \in \Nat\binom{n-1}{k-1}} \bigjoin_{i \in I} x_i \right) \join x_n
             \right)
           && \text{by distributivity}\\
\intertext{}
      &= \quad\pascalupsort{x}{n-1}{k}
           &&\meet 
           &&\quad\biggl(\pascalupsort{x}{n-1}{k-1} \join x_n\biggr)
           && \text{by Identity \eqref{eq:pascalsort-general}} 
\end{xalignat*}
which concludes the proof.
\end{proof}

The following Proposition~\ref{recursive-sorting-converse} states that
the converse of Proposition~\ref{recursive-sorting} also holds.

\begin{proposition}
\label{recursive-sorting-converse}
Let $(X,\meet,\join,\bot,\top)$ be a bounded lattice which is \emph{not} distributive.
Then there exists a sequence $x = (x_1,x_2,x_3)$ in $X$ such that Identity~\eqref{eq:pascal-identity}
is not satisfied.
\end{proposition}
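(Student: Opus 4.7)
The plan is to specialize Identity~\eqref{eq:pascal-identity} to the smallest interesting instance, namely $n = 3$ and $k = 2$, and then invoke the classical theorem of Birkhoff, according to which a lattice is distributive if and only if it contains no sublattice isomorphic to the diamond $M_3$ or to the pentagon $N_5$. This reduces the problem to exhibiting an explicit bad triple inside such a sublattice.

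After unfolding \eqref{eq:pascalsort} and \eqref{eq:pascalsort-general}, the identity at $n = 3$, $k = 2$ reads
\begin{align*}
  (x_1 \join x_2) \meet (x_1 \join x_3) \meet (x_2 \join x_3)
    &= (x_1 \join x_2) \meet \bigl((x_1 \meet x_2) \join x_3\bigr).
\end{align*}
Neither $\bot$ nor $\top$ appears on either side, so it is enough to find a triple inside \emph{any} sublattice of $X$ for which the two sides disagree; the $0$ and $1$ of that sublattice need not coincide with those of $X$.

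Since $X$ is not distributive, Birkhoff's theorem yields a sublattice $S \subseteq X$ isomorphic to either $M_3$ or $N_5$; I write $0_S$ and $1_S$ for its least and greatest elements. In the $M_3$ case, I take the three pairwise incomparable middle elements $a, b, c$ (any two of which meet to $0_S$ and join to $1_S$) and set $(x_1, x_2, x_3) = (a, b, c)$: the left-hand side evaluates to $1_S \meet 1_S \meet 1_S = 1_S$, while the right-hand side becomes $1_S \meet (0_S \join c) = c \neq 1_S$. In the $N_5$ case, I use the chain $0_S < a < b < 1_S$ together with the incomparable element $c$ satisfying $a \meet c = b \meet c = 0_S$ and $a \join c = b \join c = 1_S$, and set $(x_1, x_2, x_3) = (b, c, a)$: the left-hand side is $1_S \meet b \meet 1_S = b$, while the right-hand side is $1_S \meet (0_S \join a) = a$, and $a \neq b$.

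The main obstacle is really just invoking Birkhoff's theorem cleanly; once the $M_3$ or $N_5$ sublattice is in hand, the verification in each case is a handful of lattice operations inside a five-element sublattice. A possible alternative would be to try deriving distributivity directly from the identity holding for all triples, but the forbidden-sublattice reduction appears considerably shorter and easier to justify.
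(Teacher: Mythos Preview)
Your proof is correct and follows essentially the same approach as the paper: invoke Birkhoff's theorem to obtain an $M_3$ or $N_5$ sublattice, then exhibit an explicit triple in each case for which the $n=3$, $k=2$ instance of Identity~\eqref{eq:pascal-identity} fails. Your explicit observation that neither $\bot$ nor $\top$ occurs in the unfolded identity at this instance---so that the computation may safely be carried out in the sublattice regardless of its extremes---is a clarifying point the paper leaves implicit.
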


\begin{proof}
According to a standard result on distributive lattices~\cite[Theorem~4.7]{roman2008lattices},
a lattice is \emph{not} distributive, if and only if it contains a sublattice
which is isomorphic to either~$N_5$ or~$M_3$ (see Figure~\ref{fig:n5-m3}).

\begin{figure}[hbt]
\begin{center}
\includegraphics[width=0.85\linewidth]{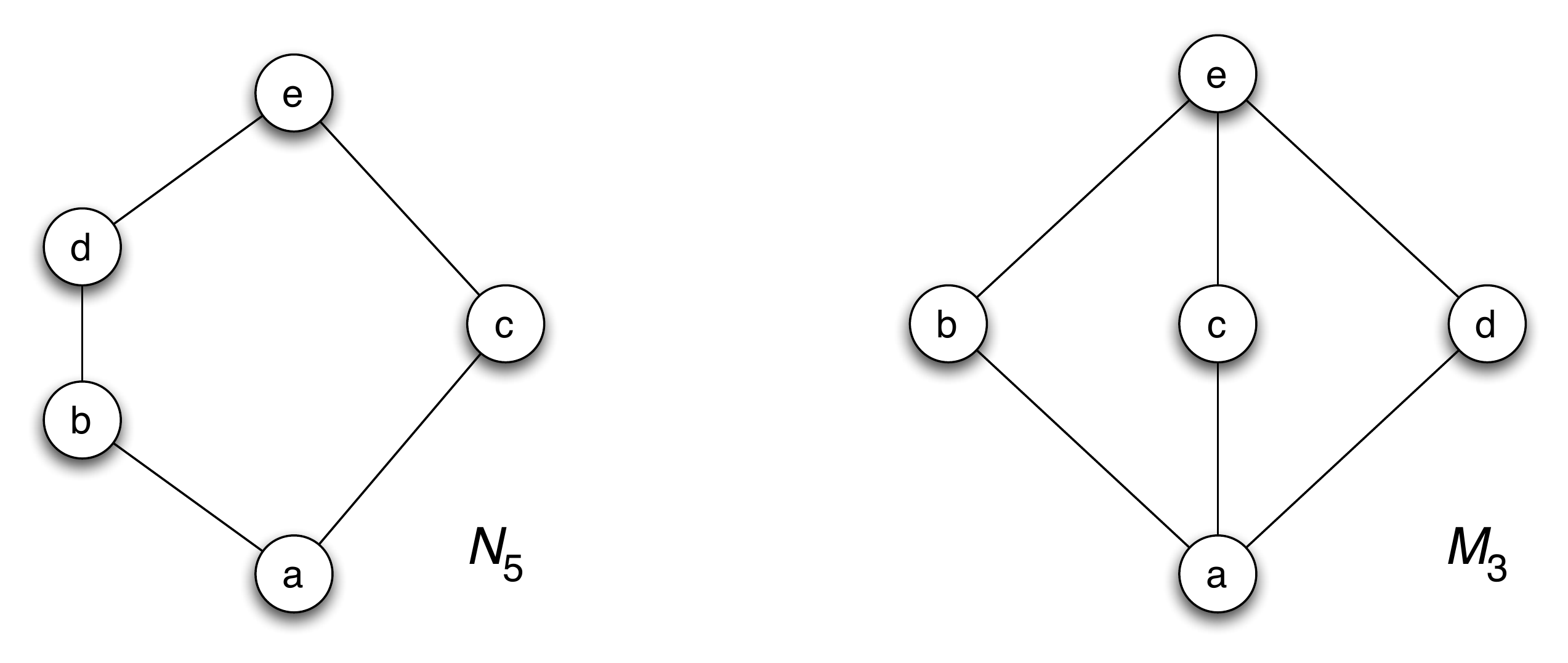}
\end{center}
\caption{\label{fig:n5-m3} The non-distributive lattices $N_5$ and $M_3$}
\end{figure}

From Identity~\eqref{eq:weakupsort} (see also~\cite[Identity~9]{arxiv-sorting-in-lattices})
follows for the elements of 
$\weakupsort{x} = (\weakupsort{x}_1, \weakupsort{x}_2, \weakupsort{x}_3)$
\begin{subequations}
\label{eq:weakupsort3}
\begin{align}
  \weakupsort{x}_1 &= x_1 \meet x_2 \meet x_3 \\
  \weakupsort{x}_2 &= (x_1 \join x_2) \meet (x_1 \join x_3) \meet (x_2 \join x_3) \\
  \weakupsort{x}_3 &= x_1 \join x_2 \join x_3.
\end{align}
\end{subequations}

If $X$ contains the sublattice $N_5$, then we consider the sequence $x = (c,d,b)$ and 
its subsequence $(c,d)$.
From Identity~\eqref{eq:weakupsort3} then follows
\begin{align*}
    \weakupsort{(c,d,b)} &= (a,d,e) \\
    \weakupsort{(c,d)} &= (a,e).
\end{align*}
Thus, we have
\begin{xalignat*}{3}
    \pascalupsort{x}{3}{2} &= d &
    \pascalupsort{x}{2}{2} &= e &
    \pascalupsort{x}{2}{1} &= a.
\end{xalignat*}
However, applying Identity~\eqref{eq:pascal-identity} we obtain
\begin{align*}
    \pascalupsort{x}{3}{2} 
          &= \pascalupsort{x}{2}{2} \meet \left(\pascalupsort{x}{2}{1} \join x_3\right) \\
          &= e \meet \left(a \join b\right)\\
          & = e \meet b \\
          &= b && \text{instead of $d$.}
\end{align*}

If $X$ contains the sublattice $M_3$, then we consider the sequence $x = (b,c,d)$ and 
its subsequence $(b,c)$.
From Identity~\eqref{eq:weakupsort3} then follows
\begin{align*}
    \weakupsort{(b,c,d)} &= (a,e,e) \\
    \weakupsort{(b,c)} &= (a,e).
\end{align*}
We therefore have
\begin{xalignat*}{3}
    \pascalupsort{x}{3}{2} &= e &
    \pascalupsort{x}{2}{2} &= e &
    \pascalupsort{x}{2}{1} &= a.
\end{xalignat*}
Again, applying Identity~\eqref{eq:pascal-identity} we obtain
\begin{align*}
    \pascalupsort{x}{3}{2} 
          &= \pascalupsort{x}{2}{2} \meet \left(\pascalupsort{x}{2}{1} \join x_3\right) \\
          &= e \meet \left(a \join d\right) \\
          &= e \meet d \\
          &= d && \text{instead of $e$.}
\end{align*}
\end{proof}

Using Identity~\eqref{eq:pascal-identity}, we can prove
following Lemma~\ref{last-is-largest}, which generalizes a known fact known from sorting in 
a total order: If one knows that $x_n$ is greater or equal that the
preceding elements $x_1,\ldots,x_{n-1}$ then
sorting the sequence $(x_1,\ldots,x_n)$ can be accomplished 
by sorting $(x_1,\ldots,x_{n-1})$ and simply appending~$x_n$.

\begin{lemma}
\label{last-is-largest}
Let $(X,\meet,\join, \bot, \top)$ be a bounded distributive lattice
and~$x$ be a sequence of length~$n$.
If the condition $x_i \leq x_n$ holds for $1 \leq i \leq n-1$, then
\begin{align*}
   \pascalupsort{x}{n}{n} &= x_n
\intertext{and}
   \pascalupsort{x}{n}{i} &= \pascalupsort{x}{n-1}{i} 
\end{align*}
holds for $1 \leq i \leq n-1$.
\end{lemma}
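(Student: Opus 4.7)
The plan is to use Proposition~\ref{recursive-sorting} as the main tool. The first claim, $\pascalupsort{x}{n}{n} = x_n$, is immediate: by Identity~\eqref{eq:pascalsort-general}, $\pascalupsort{x}{n}{n} = \bigjoin_{i=1}^{n} x_i$, and since $x_i \leq x_n$ for $1 \leq i \leq n-1$, this join collapses to $x_n$.

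For the second claim, I apply Identity~\eqref{eq:pascal-identity} to rewrite
\begin{align*}
   \pascalupsort{x}{n}{i} &= \pascalupsort{x}{n-1}{i} \meet \bigl(\pascalupsort{x}{n-1}{i-1} \join x_n\bigr).
\end{align*}
Thus it suffices to show that $\pascalupsort{x}{n-1}{i} \leq \pascalupsort{x}{n-1}{i-1} \join x_n$, since then the meet reduces to $\pascalupsort{x}{n-1}{i}$.

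The key observation is that $\pascalupsort{x}{n-1}{i} \leq x_n$. Indeed, by Identity~\eqref{eq:pascalsort-general},
\begin{align*}
   \pascalupsort{x}{n-1}{i} &= \bigmeet_{I \in \Nat\tbinom{n-1}{i}} \bigjoin_{j \in I} x_j,
\end{align*}
and for every $I \in \Nat\tbinom{n-1}{i}$ and every $j \in I$, we have $j \leq n-1$ and hence $x_j \leq x_n$ by hypothesis. Therefore $\bigjoin_{j \in I} x_j \leq x_n$ for each such $I$, so the meet is $\leq x_n$. Combining this with $x_n \leq \pascalupsort{x}{n-1}{i-1} \join x_n$ gives the desired inequality.

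The argument covers $1 \leq i \leq n-1$ uniformly, including the boundary case $i = 1$ where $\pascalupsort{x}{n-1}{0} = \bot$ by Identity~\eqref{eq:pascalsort}. There is no real obstacle here; the main subtlety is simply recognizing that the hypothesis $x_i \leq x_n$ propagates to the sorted sequence, making the ``insertion'' step trivial because the new element is already at least as large as the sorted prefix.
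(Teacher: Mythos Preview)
Your proof is correct and follows essentially the same route as the paper's: both invoke Identity~\eqref{eq:pascal-identity} and reduce the claim to the key inequality $\pascalupsort{x}{n-1}{i} \leq x_n$. The only cosmetic differences are that the paper cites an external lemma for this inequality while you derive it directly from Identity~\eqref{eq:pascalsort-general}, and the paper simplifies in two explicit steps (first $\pascalupsort{x}{n-1}{i-1}\join x_n = x_n$, then $\pascalupsort{x}{n-1}{i}\meet x_n = \pascalupsort{x}{n-1}{i}$) rather than via your single chain $\pascalupsort{x}{n-1}{i} \leq x_n \leq \pascalupsort{x}{n-1}{i-1}\join x_n$.
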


\begin{proof}
The first equation follows directly from the fact that $\weakupsort{x}_n$
is the supremum of the values $x_1,\ldots,x_n$.

Regarding the second equation, we known 
from~\cite[Lemma~3.3]{arxiv-sorting-in-lattices} that
if for $1 \leq i \leq n-1$ the inequality $x_i \leq x_n$ holds, then 
\begin{align*}
   \pascalupsort{x}{n-1}{i} &\leq x_n.
\intertext{This inequality is also valid for $i = 0$ because}
\pascalupsort{x}{n-1}{0} = \bot\\
\intertext{holds by Identity~\eqref{eq:pascalsort}.
 From general properties of meet and join then follows that}
   \pascalupsort{x}{n-1}{i} \join x_n &= x_n \\
   \pascalupsort{x}{n-1}{i} \meet x_n &= \pascalupsort{x}{n-1}{i}
\intertext{holds for $0 \leq i \leq n-1$.
We can therefore simplify Identity~\eqref{eq:pascal-identity}}
   \pascalupsort{x}{n}{i} &= \pascalupsort{x}{n-1}{i} \meet \biggl(\pascalupsort{x}{n-1}{i-1} \join x_{n}\biggr)
\intertext{first to}
     &= \pascalupsort{x}{n-1}{i} \meet x_{n}
\intertext{and finally to}
     &= \pascalupsort{x}{n-1}{i}.
\end{align*}
\end{proof}

\section{Insertion sort in lattices}

Equation~\eqref{eq:pascal-sorting-symbolic} symbolically represents
Identity~\eqref{eq:pascal-identity}.
Whenever an arrow~$\searrow$ and and arrow~$\swarrow$ meet, the
values are combinedby a \emph{meet}.
In the case of an arrow~$\searrow$, however, first the value at the origin of the arrow
is combined with the sequence value~$x_n$ through a \emph{join}.
%
\begin{align}
  \label{eq:pascal-sorting-symbolic}
  \begin{array}{cccccc}
      & \pascalupsort{x}{n-1}{k-1}  & & & & \pascalupsort{x}{n-1}{k} \\
      & & \searrow & &  \swarrow & \\
      x_n      & \longrightarrow & & \pascalupsort{x}{n}{k} & &
  \end{array}
\end{align}

\clearpage

Figure~\ref{fig:pascal-sort} integrates several instance of 
Equation~\eqref{eq:pascal-sorting-symbolic} in order to graphically represent 
Identity~\eqref{eq:pascal-identity} and to emphasize its close relationship
to Pascal's triangle.

\begin{figure}[hbt]
\begin{center}
\includegraphics[width=0.95\linewidth]{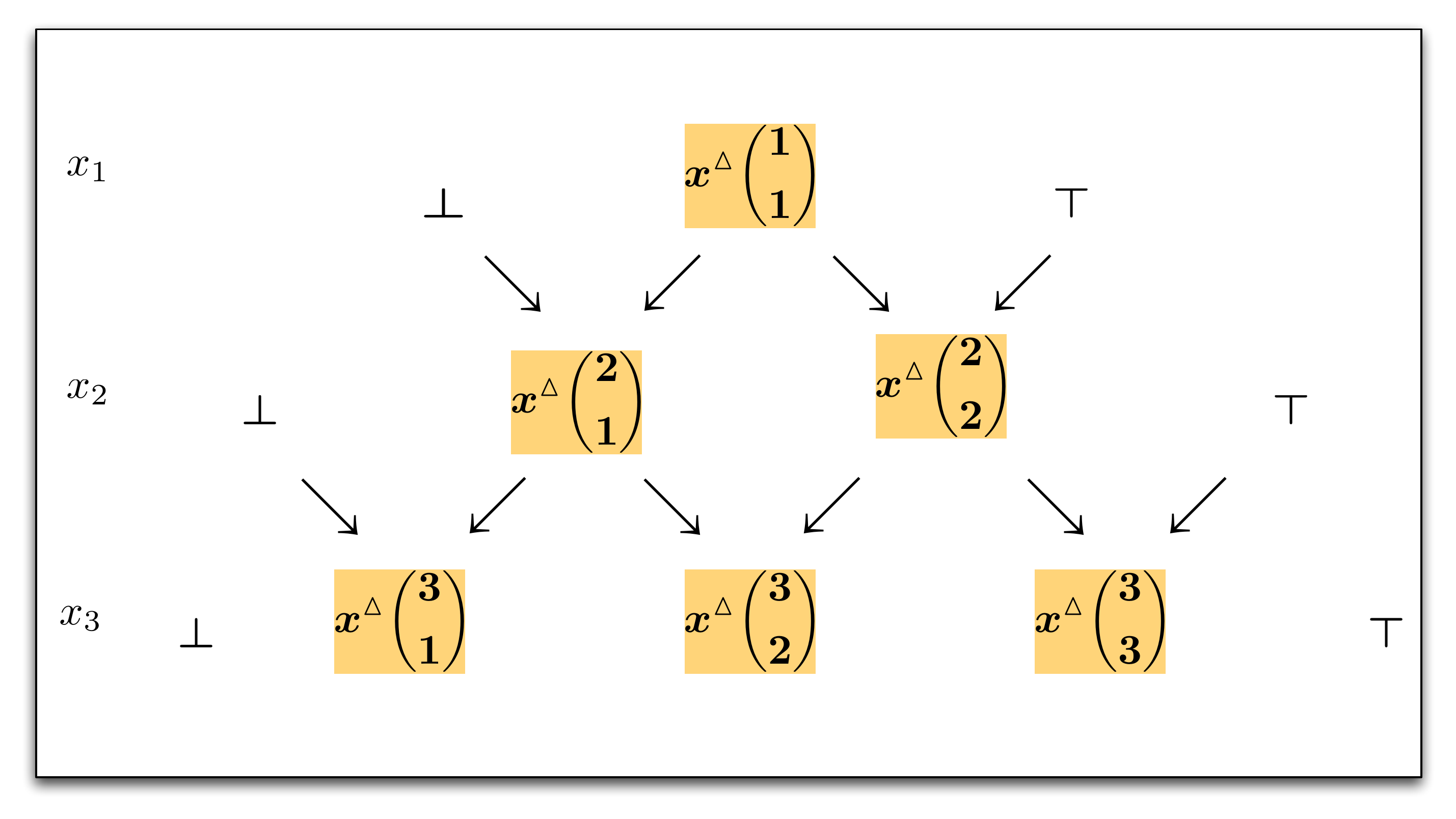}
\end{center}
\caption{\label{fig:pascal-sort} A graphical representation of Identity~\eqref{eq:pascal-identity}}
\end{figure}

Figure~\ref{fig:insertion-sort} outlines an algorithm, which
is based on Identity~\eqref{eq:pascal-identity},
and that starting from $\weakupsort{(x_1)} = (x_1)$ successively 
computes~$\weakupsort{(x_1,\ldots,x_i)}$ for $2 \leq i \leq n$.

\begin{figure}[hbt]
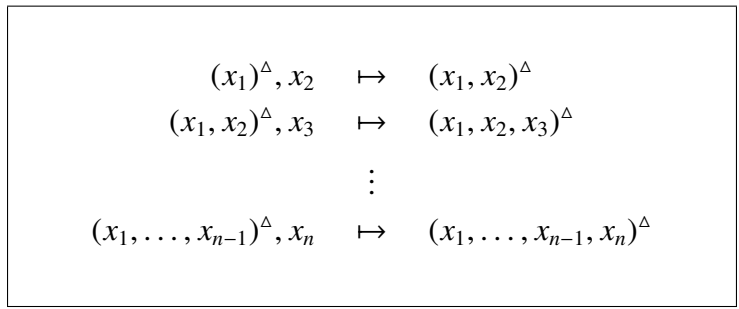

\begin{center}
\fbox{\begin{minipage}{0.7\textwidth}
\begin{align*}
    \weakupsort{(x_1)}, x_2 &\quad\mapsto\quad \weakupsort{(x_1,x_2)} \\
    \weakupsort{(x_1,x_2)}, x_3 &\quad\mapsto\quad \weakupsort{(x_1,x_2,x_3)} \\
                                &\quad\vdotswithin{\mapsto} \\
    \weakupsort{(x_1,\ldots,x_{n-1})}, x_n &\quad\mapsto\quad \weakupsort{(x_1,\ldots,x_{n-1},x_n)} \\
\end{align*}
\end{minipage}}
\end{center}
\caption{\label{fig:insertion-sort} A simple algorithm based on Identity~\eqref{eq:pascal-identity}}
\end{figure}

From Identity~\eqref{eq:pascal-identity} follows that in step~$i$
exactly~$i$ joins and~$i$ meets must be performed.
Thus, altogether there are
\begin{align*}
    \sum_{i = 2}^{n} 2*i &= n(n+1) - 2
\end{align*}
applications of join and meet.
In other words, such an implementation has quadratic complexity.
The algorithm in Figure~\eqref{fig:insertion-sort} can be considered
as \emph{insertion sort}~\cite[\S~5.2.1]{TAOCP3} for lattices
because one element at a time is added to an already ``sorted'' sequence.

Table~\ref{tbl:performance-quad} shows the results of performance measurements in 
the bounded and distributive lattice~$(\Nat,\gcd,\lcm, 1, 0)$.
Here, we are using an implementation that is based on the algorithm in
Figure~\eqref{fig:insertion-sort}.

\begin{table}[hbt]
\begin{center}
\begin{tabular}{|l|c|c|c|c|}
\hline
  \textbf{sequence length}  & 100  &  1000  & 10000 & 100000  \\
\hline
  \textbf{time in $s$} & 0 &  0 & 3.4 &  420  \\
\hline
\end{tabular}
\end{center}
\caption{\label{tbl:performance-quad} Wall-clock time for computing~$\weakupsort{(1,\ldots,n)}$
   according to Equation~\eqref{eq:pascal-identity}}
\end{table}

These results show that sorting in lattices can now be applied to much
larger sequences than those shown in Table~\eqref{tbl:performance-exp}
before the limitations of an algorithm with quadratic complexity become noticeable.

\section{Conclusions}

The main results of this paper are
Proposition~\ref{recursive-sorting} that proves Identity~\eqref{eq:pascal-identity}
for bounded distributive lattices and
Proposition~\ref{recursive-sorting-converse} that shows the necessity of the distributivity
for Identity~\eqref{eq:pascal-identity} to hold.

The remarkable points of Identity~\eqref{eq:pascal-identity} are that it
exhibits a strong analogy between sorting and Pascal's triangle,
allows to sort in lattices with quadratic complexity, and
is in fact a generalization of \emph{insertion sort} for lattices.

\section{Acknowledgment}

I am very grateful for the many corrections and valuable suggestions of my colleagues
Jochen Burghardt and Hans Werner Pohl.

In particular, Jochen Burghardt's suggestion to investigate whether the
distributivity in Proposition~\ref{recursive-sorting} is really necessary led to 
Proposition~\ref{recursive-sorting-converse}.
Hans Werner Pohl pointed out the analogy of the algorithm in
Figure~\ref{fig:insertion-sort} to insertion sort.


\begin{thebibliography}{1}

\bibitem{arxiv-sorting-in-lattices}
J.~{Gerlach}.
\newblock {Sorting in Lattices}.
\newblock {\em ArXiv e-prints}, March 2013.

\bibitem{roman2008lattices}
S.~Roman.
\newblock {\em Lattices and Ordered Sets}.
\newblock Springer-Verlag New York, 2008.

\bibitem{TAOCP3}
Donald~E. Knuth.
\newblock {\em The Art of Computer Programming, Volume III: Sorting and
  Searching}.
\newblock Addison-Wesley, 1973.

\end{thebibliography}

\end{document}